\newtheorem{theorem}{Theorem}
\newtheorem{corollary}[theorem]{Corollary}
\newtheorem{conjecture}[theorem]{Conjecture}
\newtheorem{proposition}[theorem]{Proposition}
\newtheorem{remark}{Remark}
\newtheorem{example}{Example}
\newcommand{\perm}{{\rm Perm}}
\newcommand{\diag}{{\rm diag}}
\newcommand{\Mcap}{{\rm cap}}
\newcommand{\E}[1]{\langle#1\rangle}
\newcommand{\norm}[1]{\parallel #1 \parallel}
\newcommand{\edij}{i \rightarrow j}
\newcommand{\edji}{j \rightarrow i}
\newcommand{\edkl}{k \rightarrow l}
\newcommand{\PMatch}{PM}
\newcommand{\eq}{}
\newcommand{\eqs}{}
\newenvironment{proof}[1][Proof:]{\begin{trivlist}
\item[\hskip \labelsep {\bfseries #1}]}{\end{trivlist}}
\begin{document}

\title{Belief propagation and loop calculus for the permanent of a non-negative matrix}
\author{Yusuke Watanabe$^1$ and Michael Chertkov$^2$}
\address{$^1$ Institute of Statistical
Mathematics, 10-3 Midori-cho, Tachikawa, Tokyo 190-8562 Japan.}
\address{$^2$ Center for Nonlinear
Studies and Theoretical Division, LANL, NM, 87545\\
also New Mexico Consortium, Los Alamos, NM
87544.}
\eads{\mailto{watay@ism.ac.jp}, \mailto{chertkov@lanl.gov}}

\begin{abstract}
We consider computation of permanent of a positive $(N\times N)$ non-negative matrix,
$P=(P_i^j|i,j=1,\cdots,N)$, or equivalently the problem
of weighted counting of the perfect matchings over the
complete bipartite graph $K_{N,N}$.
The problem is known to be of likely exponential complexity.
Stated as the partition function $Z$ of a graphical model, the problem allows exact Loop Calculus
representation [Chertkov, Chernyak '06] in terms of an interior minimum of the
Bethe Free Energy functional over non-integer doubly stochastic matrix of marginal beliefs, $\beta=(\beta_i^j|i,j=1,\cdots,N)$,
also correspondent to a fixed point of the iterative message-passing algorithm of the Belief Propagation (BP) type.
Our main result is an explicit expression of the exact partition function (permanent) in terms of the matrix of BP marginals,  $\beta$, as $Z=\mbox{Perm}(P)=Z_{BP} \mbox{Perm}(\beta_i^j(1-\beta_i^j))/\prod_{i,j}(1-\beta_i^j)$,
where $Z_{BP}$ is the BP expression for the permanent stated explicitly in terms of $\beta$.  
We give two derivations of the formula, a direct one based on the Bethe Free Energy and
an alternative one combining the Ihara graph-$\zeta$ function and the Loop Calculus approaches.
Assuming that the matrix $\beta$ of the Belief Propagation marginals is calculated, we provide two lower bounds and one upper-bound to estimate the multiplicative term. Two complementary lower bounds are based on the Gurvits-van der Waerden theorem and on a relation between the modified permanent and determinant respectively.
\end{abstract}

\submitto{\JPA}


\section{Introduction}

The problem of calculating the permanent of a non-negative matrix arises in many contexts in
statistics,  data analysis and physics. For example, it is intrinsic to the parameter learning of a flow
used to follow particles in turbulence and to cross-correlate two subsequent images
\cite{10CKKVZ}. However, the problem is $\# P$-hard  \cite{79Val},  meaning that solving it
in a time polynomial in the system size, $N$, is unlikely.
Therefore, when size of the matrix is sufficiently large, one naturally looks for ways to approximate the
permanent. A very significant breakthrough  was achieved with invention of a so-called
Fully-Polynomial-Randomized Algorithmic Schemes (FPRAS) for the permanent problem \cite{04JSV}:
the permanent is approximated in a polynomial time, with
high probability and within an arbitrarily small relative error.
However, the complexity of this FPRAS is $O(N^{11})$, making it impractical for the majority of realistic applications.
This motivates the task of finding a lighter deterministic or probabilistic algorithm capable of
evaluating the permanent more efficiently.

This paper continues the thread of \cite{08CKV,10CKKVZ} and \cite{09HJ}, where the Belief Propagation (BP) algorithm was suggested as an efficient heuristic of good (but not absolute) quality to approximate the permanent. The BP family of algorithms,
originally introduced in the context of error-correction codes \cite{63Gal} and artificial
intelligence \cite{88Pea}, can generally be stated for any graphical model \cite{05YFW}.
The exactness of the BP on any graph without loops  suggests that the algorithm can be an
efficient heuristic for evaluating the partition function or for finding a Maximum
Likelihood (ML) solution for the Graphical Model (GM) defined on sparse graphs.  However, in the
general loopy cases one would normally not expect BP to work well,  thus making the heuristic results of \cite{08CKV,10CKKVZ,09HJ} somehow surprising,
even though not completely unexpected in view of existence of polynomially efficient algorithms for the ML version of the problem \cite{55Kuh,92Ber},
also realized in \cite{08BSS} via an iterative BP algorithm. This raises the questions of understanding the performance of BP: what
it does well and what it misses? It also motivates the challenge of improving the BP
heuristics.

An approach potentially capable of handling the question and the challenge was recently suggested in the general framework of GM.
The Loop Series/Calculus (LS) of \cite{06CCa,06CCb} expresses
the ratio between the Partition Function (PF) of a binary GM and its BP estimate in terms of a finite series, in which each term is associated with the so-called generalized loop (a subgraph with all vertices of degree larger than one) of the graph. Each term in the series,  as well as the BP estimate of the partition function, is expressed in terms of a doubly stochastic matrix of marginal probabilities, $\beta=(\beta_i^j|i,j=1,\cdots,N)$, for matching pairs to contribute a perfect matching. This matrix $\beta$ describes a minimum of the so-called Bethe free energy,  and it can also be understood as a fixed point of an iterative BP algorithm.
The first term in the resulting LS is equal to one.
Accounting for all the loop-corrections, one recovers the
exact expression for the PF. In other words,  the LS holds the key to understanding the gap between
the approximate BP estimate for the PF and the exact result.
In section \ref{sec:one} and section \ref{sec:LC},
we will give a technical introduction to the variational Bethe Free Energy (BFE) formulation of BP and a brief
overview of the LS approach for the permanent problem respectively.

{\bf Our results.}
In this paper, we develop an LS-based approach to describe the quality of the BP
approximation for the permanent of a non-negative matrix.
(i) Our natural starting point is the analysis of the BP solution itself conducted in section
\ref{sec:BP}.  Evaluating the permanent of the non-negative matrix, $P=((p_i^j)^{1/T}|i,j=1,\cdots,N)$,
    dependent on the temperature parameter, $ T \in [0,\infty]$, we find
    that a non-integer BP solution is observed only at $T>T_c$, where $T_c$ is defined by
    \eq(\ref{CritEq}).
(ii) At $T>T_c$, we derive an alternative representation for the LS in section \ref{sec:Per_BP_Per}.
    The entire LS is collapsed to a product of two terms:
    the first term is an easy-to-calculate function of $\beta$, and the second term is the permanent of the matrix, $\beta.*(1-\beta)=(\beta_i^j(1-\beta_i^j))$.
    (The binary operator $.*$ denotes the element-wise multiplication of matrices.)
    This is our main result stated in theorem \ref{LS_new}, and the majority of the consecutive statements of our paper follows from it. We also present yet another, alternative, derivation of the theorem \ref{LS_new} using the multivariate Ihara-Bass formula for the graph zeta-function in subsection \ref{subsec:Ihara}.
(iii) Section \ref{sec:low} presents two easy-to-calculate lower bounds for the LS.
    The lower bound stated in the corollary \ref{Gurvits_bound} is based on the Gurvits-van der Waerden theorem applied to $\mbox{Perm}(\beta.*(1-\beta))$.
    Interestingly enough this lower bound is invariant with respect to the BP transformation,
    i.e. it is exactly equivalent to the lower bound derived via
    application of the van der Waerden-Gurvits theorem to the original permanent. Another
    lower bound is stated in theorem \ref{second_low}. Note,  that as follows from an
    example discussed in the text, the two lower bounds are complementary: the latter is stronger
    at sufficiently small temperatures,  while the former
    dominates the large $T$ region.
(iv) Section \ref{sec:up} discusses an upper bound on the transformed permanent based on the application of the
    Godzil-Gutman formula and the Hadamard inequality. Possible future extensions of the approach are discussed in section \ref{sec:path}.

\section{Background (I): Graphical Models, Bethe Free energy and Belief
Propagation.}\label{sec:one}

Permanent of a non-negative matrix,
$P=((p_i^j)^{1/T}|i,j=1,\cdots,N) \quad  (0\leq p_i^j,\ 0\leq T\leq\infty)$,
is a sum over the set of permutations on $\{1,\ldots,N\}$,
which can be parameterized via binary-component vectors, $\sigma$, corresponding to perfect matchings (PM) on
the complete bipartite graph $K_{N,N}$:
\begin{equation}
\left\{ \sigma=(\sigma_i^j) \in \{0,1\}^{N \times N} \Big|
\forall i:\ \sum_{j=1}^N \sigma_i^j=1, \quad  \forall j:\ \sum_{i=1}^N\sigma_i^j=1
 \right\}.
\end{equation}
This binary interpretation allows us to represent
the permanent as the partition function (PF), $Z$, of a probabilistic model over the set of perfect matchings.
Each perfect matching, $\sigma$, is realized with the
probability
\begin{eqnarray}
\fl
\quad {\cal P}(\sigma)=\frac{1}{Z}P^{\sigma}; \quad
P^{\sigma} \equiv \prod_{(i,j) \in E} (p_i^j)^{\sigma_i^j/T},\
Z\equiv \sum_{\sigma : \PMatch}(p_i^j)^{\sigma_i^j/T} = \perm (P), \label{GM}
\end{eqnarray}
where $E=\{ (i,j) |\ i,j=1,\ldots,N \}$ is the edges of $K_{N,N}$.
In the zero-temperature limit, $T\to 0$,  \eq(\ref{GM})
selects one special ML solution, $\sigma_*=\arg\max_{\sigma} P^{\sigma}$.
(Here and below we assume that $P$ is non-degenerate, in the sense that at $T\to 0$, ${\cal P}(\sigma)\to 0$ for $\forall\ \sigma\neq\sigma_*$.)

For a generic GM, assigning (un-normalized) weight $P^{\sigma}$ to a state $\sigma$, one defines exact variational (called Gibbs, in statistical physics, and Kullback-Leibler in statistics) functional
\begin{eqnarray}
{\cal F}\{b(\sigma)\}\equiv T
\sum_{ \sigma }b(\sigma)\ln\frac{b(\sigma)}{P^{\sigma}}. \label{Gibbs}
\end{eqnarray}
One finds that under condition that the belief, $b(\sigma)$, understood as a proxy to the
probability ${\cal P}(\sigma)$, is normalized to unity, $\sum_{\sigma \in PM} b(\sigma)=1$, the Gibbs
functional is convex and it achieves its only minimum at $b(\sigma)={\cal P}(\sigma)$ and
${\cal F}\{ \mathcal{P} \}=-T\ln Z$.

BP method offers an approximation which is exact when the underlying GM is a tree.
As shown in \cite{05YFW}, the BP approach can also be stated for a general GM as a relaxation of the Gibbs functional (\ref{Gibbs}).
In this paragraph we briefly review the concept of \cite{05YFW} with application to the permanent problem.
For the GM (\ref{GM}), the BP approximation for the state beliefs becomes
\begin{eqnarray}
 b(\sigma)\approx b_{\it BP}(\sigma)=
 \frac{\prod_i b_i(\sigma_i)\prod_j b^j(\sigma^j)}{
 \prod_{(i,j) \in E} b_i^j(\sigma_i^j)},
\label{BP_Belief}
\end{eqnarray}
where $\forall i,j$: $\sigma_i=(\sigma_i^j \in \{0, 1\}|j=1,\cdots, N)$
s.t. $\sum_j\sigma_i^j=1$ and
$\sigma^j=(\sigma_i^j \in \{0, 1\}|i=1,\cdots,N)$
s.t. $\sum_i\sigma_i^j=1$, i.e. $\sigma_i$ and
$\sigma^j$ each has only $N$ allowed states corresponding to allowed local perfect matchings for
the vertices $i$ and $j$ respectively.
The vertex and edge beliefs are related to each other according to
\begin{equation}
\forall (i,j) \in E :\quad b_i^j(\sigma_i^j)= \sum\limits_{\sigma_i\setminus\sigma_i^j}b_i(\sigma_i)=
\sum\limits_{\sigma^j\setminus\sigma_i^j}b^j(\sigma^j),\label{rel}
\end{equation}
and the beliefs, as probabilities, should also satisfy the normalization conditions:
\begin{equation}
\forall (i,j) \in E :\quad b_i^j(1)+b_i^j(0)=1.\label{norm}
\end{equation}
Note,  that our notations for beliefs are not identical to ones used in \cite{05YFW}:
the multi-variable beliefs, $b_i$, are associated with vertexes of $K_{N,N}$, and the single-variable beliefs,
$b_i^j$ are associated with edges of the graph.
Substituting \eq(\ref{BP_Belief}) into \eq(\ref{Gibbs}) and
approximating $\sum_{ \sigma \in PM} b(\sigma) f(\sigma_i^j)$ with $\sum_{\sigma_i^j} b_i^j(\sigma_i^j) f(\sigma_i^j)$ etc,
one arrives at the BFE functional
\begin{eqnarray}
\fl
  {\cal F}_{\it BP}\{b_i^j(\sigma_i^j);b_i(\sigma_i);b^j(\sigma^j)\}
   \equiv E-T S, \quad
  E\equiv\sum_{(i,j)}b_i^j(1)\log(p_i^j),\label{FE}\\
  \fl S\equiv\sum_{(i,j)}\sum\limits_{\sigma_i^j}
  b_i^j(\sigma_i^j)\ln b_i^j(\sigma_i^j)
   -\sum_i\sum\limits_{\sigma_i} b_i(\sigma_i)
  \ln b_i(\sigma_i)
   - \sum_j\sum\limits_{\sigma^j} b^j(\sigma^j)\ln b^j(\sigma^j).\label{S}
\end{eqnarray}
Note that the BFE functional is bounded from below and generally non-convex,
and thus finding the absolute minimum of the BFE
is the main task of the BFE approximation. The BP approximation $Z_{BP}$ of the
partition function is given by ${\cal F}_{BP}=-T\ln Z_{BP}$ at a minimum of the BFE.

Moreover, the variational formulation of \eqs(\ref{rel},\ref{norm},\ref{FE},\ref{S}) can be significantly simplified in our case;
one can utilize \eqs(\ref{rel},\ref{norm}) and express $b_i(\sigma_i), b^j(\sigma^j)$ and $b_i^j(\sigma_i^j)$ solely in terms of the $\beta_i^j\equiv
b_i^j(1)$ variables, satisfying doubly-stochastic constraints
\begin{eqnarray}
 \forall (i,j) \in E : 0\leq \beta_i^j\leq 1; \quad
\forall i:  \sum_j \beta_i^j=1; \quad \forall j: \sum_i \beta_i^j=1. \label{ds_cond}
\end{eqnarray}
The entropy \eq(\ref{S}) becomes
\begin{eqnarray}
\fl
 S\{\beta_i^j\}&=
 \sum_{(i,j)} \left( \beta_i^j \log \beta_i^j +  (1-\beta_i^j) \log (1-\beta_i^j) \right)
-
 \sum_i \sum_j \beta_i^j \log \beta_i^j
-
 \sum_j \sum_i \beta_i^j \log \beta_i^j  \nonumber \\
\fl
 &= \sum_{(i,j)}\left((1-\beta_i^j)\ln(1-\beta_i^j)-\beta_i^j\ln\beta_i^j
      \right). \label{ES}
\end{eqnarray}
%
%
Therefore, the Bethe-Free energy approach applied to the GM (\ref{GM}) results in minimization of the following Bethe-Free Energy (BFE) functional
\begin{equation}
  \mathcal{F}_{BP}\{ \beta \} = T \sum_{(i,j) \in E}
\left(
\beta_i^j\ln\frac{ \beta_i^j }{(p_i^j)^{1/T}}
- (1-\beta_i^j)\ln(1-\beta_i^j)
\right), \label{BFE}
\end{equation}
over $\beta=(\beta_i^j)$ under the constraints \eq(\ref{ds_cond}).

To analyze the minima of the BFE,
we incorporate Lagrange multipliers $\mu_i, \mu^j$ enforcing the constraints in \eqs(\ref{ds_cond}).
Looking for a stationary point of the Lagrange function over the $\beta$ variables, one arrives at the
following set of quadratic equations for each (of $N^2$) variables, $\beta_i^j$
\begin{equation}
 \forall (i,j) \in E :\quad
 \beta_i^j(1-\beta_i^j)=(p_i^j)^{1/T}\exp\left(\mu_i+\mu^j\right). \label{BP1}
\end{equation}
One observes that any solution of \eqs(\ref{ds_cond},\ref{BP1}) at $T>0$, that contains at
least one $\beta_i^j$ which is not integer, does not contain any integers among all $\beta_i^j$.
In fact,  our main focus will be on these non-integer (interior) solutions of
\eqs(\ref{ds_cond},\ref{BP1}).
To find a solution of BP \eqs(\ref{ds_cond},\ref{BP1})
one relies on an iterative procedure. For a description of a set of
iterative BP algorithms convergent to a minimum  of the BFE for the perfect matching problem we
refer the interested reader to \cite{08CKV,10CKKVZ,09HJ}.

\begin{remark} 
{\rm
Note that just derived BP approximation differs from the so-called Mean-Field (MF) approximation corresponding to the following ansatz
\begin{eqnarray}
 b(\sigma)\approx b_{\it MF}(\sigma)=
 \prod_{(i,j) \in E} b_i^j(\sigma_i^j),
\label{BP_MF}
\end{eqnarray}
enforcing statistical independence of the edge beliefs.
If one substitutes $b(\sigma)$ by $b_{\it MF}(\sigma)$ in \eq(\ref{Gibbs}) and also accounts for the normalization condition (\ref{norm}),  which may be understood here as one enforcing the ``Fermi exclusion principle'' for an edge $(i,j)$ to contribute a perfect matching, $\sigma_i^j=1$, the resulting expression for the MF free energy will turn into BP expression (\ref{BFE}) with the first term there changing sign to $-$.
One expects that BP approximation outperforms MF approximation in accuracy.
Consider,  for example, $N=10$ and $\beta_i^j= 1/N$, then
the exact, BP and MF entropies are $\ln(10!) \approx 15.10$, $ 100(.9\ln(.9)-.1\ln(.1)) \approx 13.54$
and $100(-.9\ln(.9)-.1 \ln(.1)) \approx 32.50$, respectively. An intuitive explanation for
MF overestimating the entropy term is related to the fact that MF ignores correlations related to competitions between neighboring edges for contributing a perfect matching.
}
\end{remark}

\section{Threshold Behavior of BP at Low Temperatures} \label{sec:BP}

As discovered in \cite{08BSS}, at $T=0$, properly scheduled iterative version of BP converges
efficiently to the ML solution of the problem. In this context it is natural to
ask the question of how a non-integer solution of BP emerges with a temperature increase. To address this
question, we first consider the following homogeneous example.

\begin{figure}
\centering \caption{This figure contains a set of illustrations based on the homogeneous example \ref{example1}
discussed in the text. $N=10$ and $W=2$ are chosen for these illustrations. Figure~\ref{fig}b
shows $T\ln Z$ for the homogeneous model (red) and respective BP expression, $T\ln Z_{BP}$ (blue) as functions of the temperature, $T$. Green dash line mark $T_c$. figure~\ref{fig}c shows comparison of different estimations of
$\ln(\mbox{Perm}(\beta.*(1-\beta))/\prod_{(i,j)}(1-\beta_i^j))$ vs the temperature parameter
$T$, where $\beta$ is the matrix of marginal beliefs evaluated at a fixed point of BP equations.
Red, Blue, Purple, Green and Dashed-Gray lines show the exact expression, the lower bound of the
corollary \ref{Gurvits_bound}, the lower bound of the theorem \ref{second_low}, the upper bound of the
proposition \ref{up} and the BP expression, respectively. \label{fig}} \subfigure[${\cal F}_{BP}$ vs
$\epsilon$.]{\includegraphics[width=0.35\textwidth,height=0.23\textwidth]{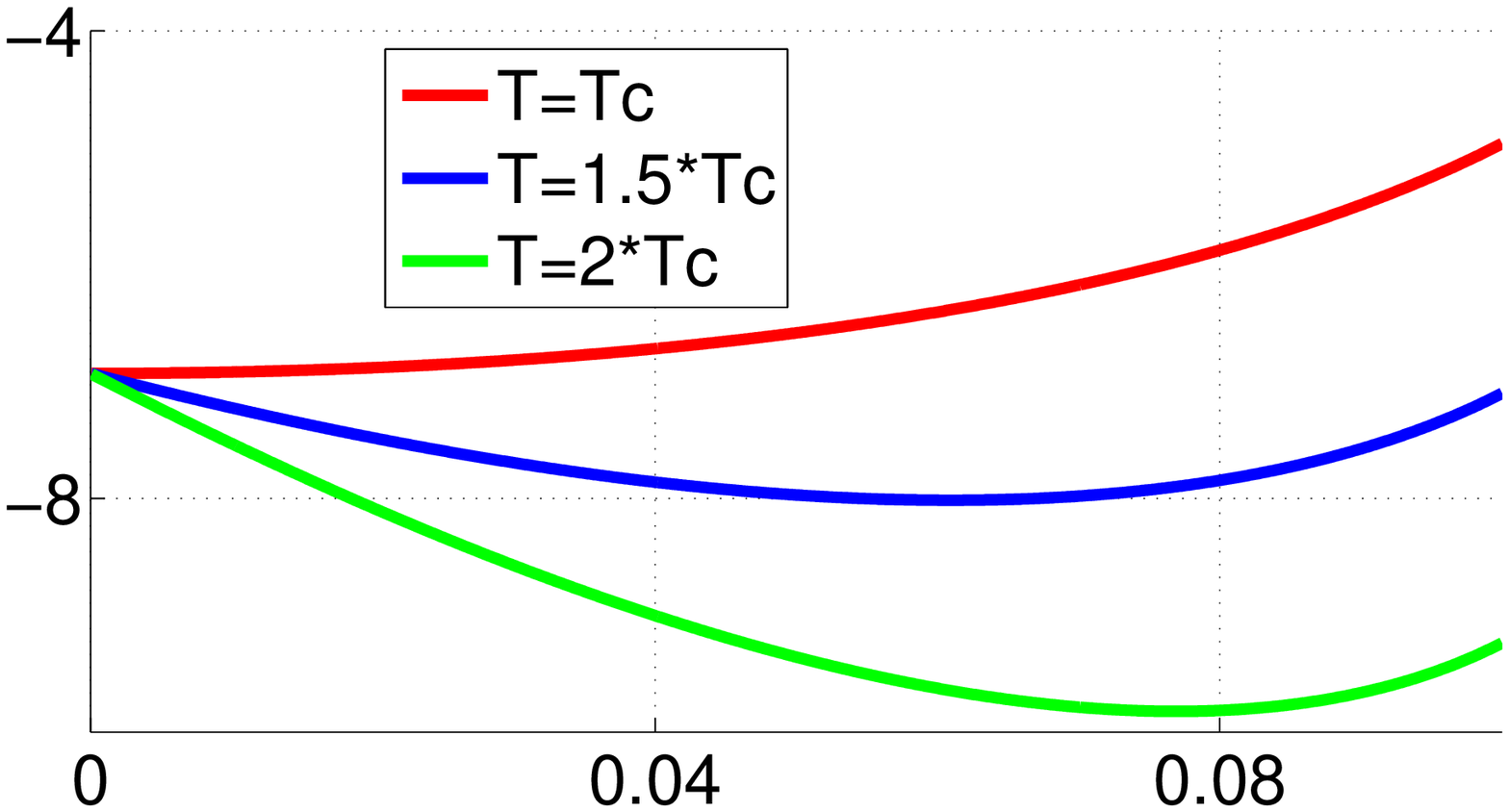}} \subfigure[$T\ln Z$ vs
$T$.]{\includegraphics[width=0.3\textwidth]{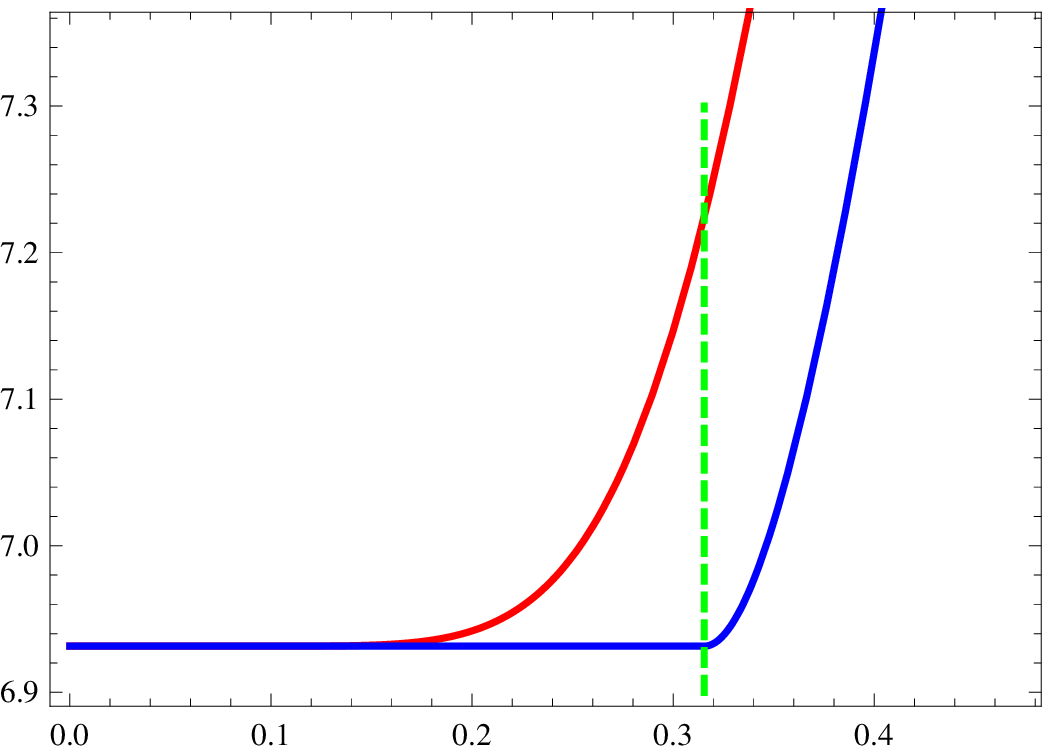}} \subfigure[$\ln(Z/Z_{BP})$ vs $T$ for
different estimators.]{\includegraphics[width=0.3\textwidth]{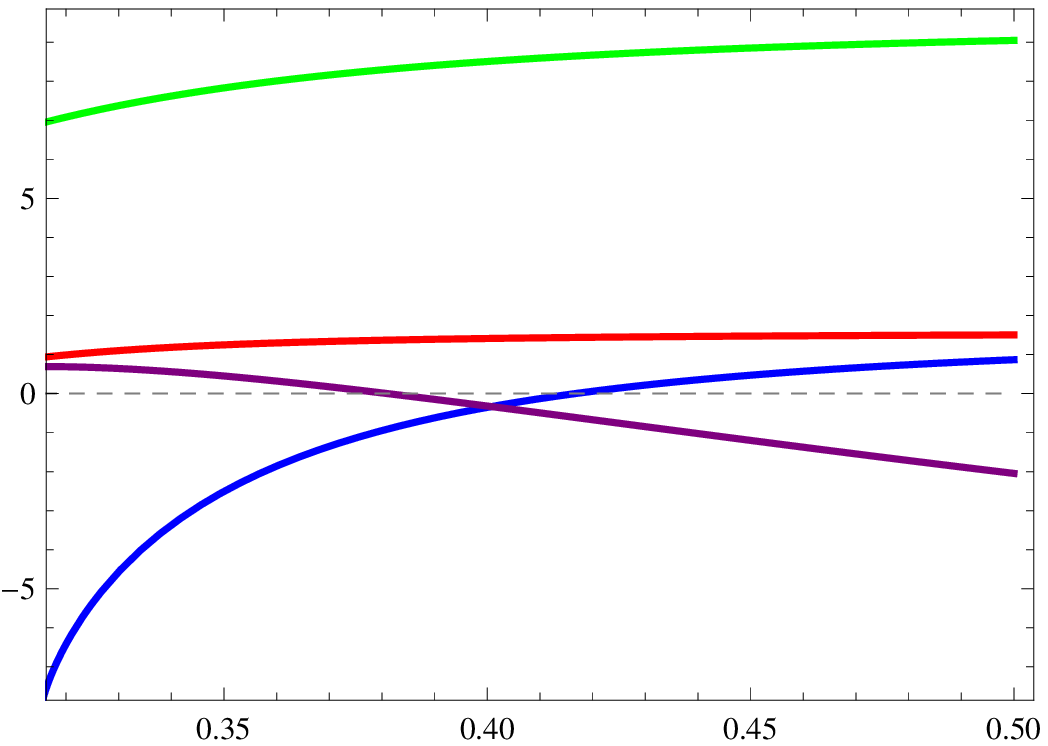}}

\end{figure}

\begin{example}{\rm
\label{example1} Define a homogeneous weight model biased toward a perfect matching solution,
$\sigma_i^j=\delta_i^j$ : $p_i^j=1$ if $i\neq j$ and
$p_i^i=W \ (W>1)$. Looking for $\beta$ in the homogeneous form
\begin{equation}
 \beta_i^j(T)=
\left\{\begin{array}{cc}
1- \epsilon (N-1)    &\mbox{ :if } i=j     \\ \epsilon             &\mbox{   :otherwise, }
\end{array}\right.
\label{example}
\end{equation}
one observes that this ansatz for $\beta$ solves the BP \eqs(\ref{ds_cond},\ref{BP1}) at $\epsilon$ equal to
$\epsilon_{min}= (N-1-W^{1/T})/((N-1)^2
-W^{1/T})$. At $T=\infty$, the probabilities are uniform, i.e. $\beta$ from \eq(\ref{example})
with $\epsilon=\epsilon_{min}$ is $\beta_i^j=1/N$ for all $(i,j) \in E$. Now consider lowering
the temperature and observe that at $T_c=\ln W/\ln (N-1)$ the nontrivial solution, with
$\beta_i^j\neq 0,1$ for all $(i,j) \in E$, turns exactly into the isolated/trivial ML one,
$\beta_i^j=\delta_i^j$. Obviously one finds that the BFE, ${\cal F}_{BF}$, considered as a
function of $\epsilon$, achieves its minimum at $\epsilon=\epsilon_{min}$ if $T>T_c$. Exactly at
$T=T_c$ this $\epsilon_{min}=0$ and the nontrivial solution merges into the isolated ML solution.
The dependence of the BFE on $\epsilon$ for different $T$ (at some exemplary values of $N$ and $W$) is shown in
figure~\ref{fig}a. 
The partition function can be calculated efficiently. Counting the configurations straightforwardly (in a brute force combinatorial manner), one derives
$Z=\sum_{k=0}^N W^{(N-k)/T} {N \choose k} D_k$.
The following recursion is used to evaluate the number of permutations 
coefficient, $D_k$: $\forall\ k \geq 2,\ \ D_k=(k-1)(D_{k-1}+D_{k-2}),\quad D_0=1,\quad D_1=0$. 
A comparison of $T\ln Z$ and $T\ln Z_{BP}$ as functions of $T$ is shown in figure~\ref{fig}b.
}
\end{example}

Returning to the case of an arbitrary nonnegative $P$, we discover that this phenomenon of the
nontrivial solution splitting at some finite nonzero (!!) temperature from the ML configuration
is generic.
\begin{proposition}
\label{prop:Tc} For any non-negative matrix $P=((p_i^j)^{1/T}|i,j=1,\cdots,N)$ one finds a
special (we call it critical) temperature, $T_c$, such that for $T>T_c+\varepsilon$ a nontrivial
solution of BP, corresponding to a local non-saturated minimum of ${\cal F}_{BP}$, dominating
the respective value corresponding to the maximum likelihood solution, is realized for at least
a sufficiently small positive $\varepsilon$.  This special solution coincides with the best
perfect matching solution at $T=T_c$ and it does not exist for $T<T_c$. The critical temperature
$T_c$ solves
\begin{equation}
 \det(P_i^j - 2 \sigma_{* i}^{\ j} P_i^{j})=0 ,  \label{CritEq}
\end{equation}
where $\sigma_*$ is the ML configuration.
\end{proposition}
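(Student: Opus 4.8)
The plan is to read (\ref{CritEq}) as the \emph{bifurcation condition} governing the collision of the non-integer (interior) branch of BP solutions with the integer corner $\beta=\sigma_*$. A non-integer minimizer of the BFE (\ref{BFE}) is a solution of the stationarity equations (\ref{BP1}) lying in the relative interior of the doubly-stochastic polytope (\ref{ds_cond}); the ML configuration corresponds to the vertex $\beta_i^j=\sigma_{*i}^{\,j}$. Since at that vertex $\beta_i^j(1-\beta_i^j)=0$ for every $(i,j)$, and $P_i^j=(p_i^j)^{1/T}>0$, the right-hand side of (\ref{BP1}) forces the multipliers to vanish there. First I would therefore set $\beta_i^j=\sigma_{*i}^{\,j}+\eta_i^j$ and $\lambda_i:=e^{\mu_i}$, $\lambda^j:=e^{\mu^j}$, treating all of $\eta_i^j,\lambda_i,\lambda^j$ as small near $T_c$, and expand (\ref{BP1}) to leading order.

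The linearization is the crux. On a matched edge ($\sigma_{*i}^{\,j}=1$) one has $\beta_i^j(1-\beta_i^j)=-\eta_i^j+O(\eta^2)$, and on an unmatched edge ($\sigma_{*i}^{\,j}=0$) one has $\beta_i^j(1-\beta_i^j)=\eta_i^j+O(\eta^2)$, which I can write uniformly as $(1-2\sigma_{*i}^{\,j})\eta_i^j$. Equation (\ref{BP1}) then gives, to leading order,
\[
 \eta_i^j \;=\; (1-2\sigma_{*i}^{\,j})\,P_i^j\,\lambda_i\lambda^j \;=\; M_i^j\,\lambda_i\lambda^j,
 \qquad M_i^j:=(1-2\sigma_{*i}^{\,j})\,P_i^j=P_i^j-2\sigma_{*i}^{\,j}P_i^j,
\]
where I used $(1-2\sigma_{*i}^{\,j})^{-1}=1-2\sigma_{*i}^{\,j}$. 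Note that $M$ is exactly the matrix appearing in (\ref{CritEq}); note also that, because $\lambda_i,\lambda^j>0$, the induced $\eta_i^j$ is automatically negative on matched edges and positive on unmatched ones, i.e. the emanating $\beta$ is a bona fide interior point.

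Next I would impose the doubly-stochastic constraints (\ref{ds_cond}) on $\eta$. The row and column sums read $\lambda_i\sum_j M_i^j\lambda^j=0$ and $\lambda^j\sum_i M_i^j\lambda_i=0$; since a genuine interior branch has all $\lambda_i,\lambda^j>0$, these collapse to $Mc=0$ and $M^\top a=0$, with $c=(\lambda^j)_j$ and $a=(\lambda_i)_i$ strictly positive. A nontrivial null vector exists iff $\det M=0$, which is precisely (\ref{CritEq}) and, through $P_i^j=(p_i^j)^{1/T}$, defines $T_c$. As a consistency check, for example \ref{example1} one gets $M=J-(1+W^{1/T})I$ (with $J$ the all-ones matrix), so $\det M=(N-1-W^{1/T})\bigl(-(1+W^{1/T})\bigr)^{N-1}$ vanishes exactly at $W^{1/T}=N-1$, i.e. $T_c=\ln W/\ln(N-1)$, and the null vector is the positive all-ones vector.

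Finally, the remaining assertions --- that the branch lives on $T>T_c$, merges into $\sigma_*$ as $T\downarrow T_c$, and strictly dominates the ML value of $\mathcal{F}_{BP}$ --- would require pushing the expansion one order further. From $\eta\sim\lambda^2$ one sees $\beta(T)\to\sigma_*$ as the multipliers vanish, giving the merging claim; the direction of the branch and the sign of $\mathcal{F}_{BP}(\beta(T))-\mathcal{F}_{BP}(\sigma_*)$ (whose leading contribution is quadratic in the bifurcation amplitude) follow from the sub-leading terms, equivalently from a transcritical/saddle-node normal form. I also expect to need a \emph{selection} principle: among the possibly several roots of $\det M(T)=0$, the relevant $T_c$ is the one whose null vectors $a,c$ are entrywise positive (a Perron-type condition), since positivity is exactly what renders the bifurcating $\beta$ admissible. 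The leading-order derivation of (\ref{CritEq}) is straightforward; I expect the main obstacle to be this second-order sign analysis together with the positivity/monotonicity argument (tracking the critical eigenvalue of $M(T)$ in $T$, anchored at the known uniform interior minimizer $\beta_i^j=1/N$ at $T=\infty$) needed to pin down one-sidedness and strict domination.
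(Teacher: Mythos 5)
Your proposal is correct and follows essentially the same route as the paper's proof: the paper likewise linearizes around the ML vertex (via $v_i^j=\beta_i^j(1-\beta_i^j)=P_i^jU_iU^j$ with $U_i=e^{\mu_i}$, $U^j=e^{\mu^j}$), imposes the doubly-stochastic constraints to get the homogeneous linear system $\sum_j(P_i^j-2\sigma_{*i}^{\,j}P_i^j)U^j=0$ and its column analogue, and reads off (\ref{CritEq}) as the solvability condition --- exactly your $Mc=0$, $M^\top a=0$. The second-order issues you flag (one-sidedness of the branch, sign of the $v$-solution, and domination of $\mathcal{F}_{BP}$) are the same points the paper dispatches with a brief ``straightforward to verify'' remark rather than a detailed argument, so your treatment is, if anything, more explicit about what remains to be checked.
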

\begin{proof}
Our proof of the proposition is constructive. Let us look for a solution of the BP equations weakly
deviating from the ML configuration $\sigma_*$.
Without loss of generality we assume that $\sigma_{* i}^{\ j} = \delta_i^j$.
We introduce $v_i^j=\beta_i^j(1-\beta_i^j)\ll 1$ and
observe that a nontrivial solution, approaching the ML one at $v\to 0$, is
$\beta_i^j=(1-(1-2\delta_i^j)[{1-4v_i^j}]^{1/2})/2$. Linearizing the normalization condition,
over $v$ one derives,
$\forall i:   v_i^i= \sum_{j \neq i}v_i^j ;  \quad
\forall j:   v_j^j= \sum_{i \neq j}v_i^j$
On the other hand,  the BP equation (\ref{BP1}), complemented by the set of linear constraints
on $v$,  translates into,
$\forall i: \  P_i^i U^i= \sum_{j \neq i} P_i^j U^j ;  \quad
\forall j: \  P_j^jU_j= \sum_{i \neq j}P_i^j U_i$,
where $U_i=\exp(\mu_i)$ and $U^j=\exp(\mu^j)$. Requiring that the later equations have a nontrivial
solution (with nonzero $v$), one arrives at the critical temperature condition, \eq(\ref{CritEq}).
It is then straightforward to verify that the extension of the nontrivial solution into the $T<T_c$
domain is unphysical (as some elements of the respective small $v$ solution are negative),
while the BFE associated with the nontrivial solution for $T>T_c$ is smaller than the one
corresponding to the ML perfect matching.
\end{proof}

\begin{conjecture}
We conjecture that the non-integer solution of BP equations discussed in proposition \ref{prop:Tc}
extends beyond the small $T_c+\varepsilon$ vicinity of $T_c$, and this solution
transitions smoothly at $T\to\infty$ into the obvious fully homogeneous solution,
$\beta_i^j=1/N$ for all $(i,j)\in E$.  Another plausible conjecture is
that no other non-integer solutions exist at $T<T_c$; therefore when the non-integer solution
discussed in the proposition emerges at $T=T_c$ it, in fact, gives a global minimum of the BFE.
\end{conjecture}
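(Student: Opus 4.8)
The plan is to treat the two assertions separately, since they call for different machinery, and to work throughout with the interior stationarity conditions \eqs(\ref{ds_cond},\ref{BP1}) written as a smooth system $F(\beta,\mu_i,\mu^j;T)=0$ on the relative interior of the Birkhoff polytope. For the first assertion — smooth continuation of the nontrivial branch from $T_c$ up to the homogeneous fixed point at $T=\infty$ — I would first establish local smoothness by the implicit function theorem. Proposition \ref{prop:Tc} supplies the seed of the branch at $T=T_c$, and the critical equation \eq(\ref{CritEq}) is precisely the degeneracy of the linearization of \eqs(\ref{ds_cond},\ref{BP1}) at the point where the branch detaches from $\sigma_*$. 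The first step is therefore to show that for $T>T_c$ this linearization is nonsingular, so that the solution is locally a smooth curve $\beta(T)$ with $v_i^j=\beta_i^j(1-\beta_i^j)>0$.

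The second step is a global continuation argument along $(T_c,\infty)$. At the far endpoint $(p_i^j)^{1/T}\to 1$ as $T\to\infty$ (using positivity of $P$), the BP equations become permutation-symmetric, $\beta_i^j=1/N$ is an interior fixed point, and a direct Hessian computation confirms it is a nondegenerate interior minimum of \eq(\ref{BFE}) at large $T$. The crux is then to follow the branch without either hitting a fold — a further degeneracy of the linearization, i.e. a generalized zero of the determinant in \eq(\ref{CritEq}) at finite $v$ — or colliding with the boundary where some $\beta_i^j\to 0,1$. Working in the variables $v_i^j$ used in the proof of Proposition \ref{prop:Tc} and tracking the constraints $v_i^j\ge 0$, I would argue that the interior solution persists and is unique for every $T>T_c$; showing that the solution set over $(T_c,\infty)$ is then a single connected arc forces the $T_c$ branch and the homogeneous $T=\infty$ point to lie on one smooth curve, which is the claimed transition.

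For the second assertion — nonexistence of other non-integer fixed points below $T_c$ and global minimality of the emerging branch — the natural route is a landscape analysis of the BFE \eq(\ref{BFE}), combining two regimes. Separating the $T$-independent energy $-\sum_{(i,j)}\beta_i^j\ln p_i^j$ from the entropy contribution, which carries an overall factor $T$ through \eq(\ref{ES}), one sees that as $T\to 0$ the functional approaches the linear program whose optimum over the Birkhoff polytope is the ML matching $\sigma_*$; bounding the entropy then forces any interior stationary point at small $T$ to sit exponentially close to a vertex, contradicting interiority. Near $T_c$ from below the branch analysis of Proposition \ref{prop:Tc} already shows the nontrivial solution attached to $\sigma_*$ is unphysical, so the real content of the conjecture is the absence of any \emph{other}, disconnected interior branch. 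Granting such uniqueness together with the free-energy comparison in the proof of Proposition \ref{prop:Tc}, global minimality follows from continuity of the optimal value: the minimizer sits at $\sigma_*$ for $T<T_c$ and moves onto the unique interior branch exactly as it emerges at $T_c$.

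The main obstacle, shared by both assertions, is the non-convexity of the BFE noted after \eq(\ref{S}). The implicit function theorem yields only local smoothness, and upgrading this to a statement valid on all of $(T_c,\infty)$ requires excluding folds and boundary collisions uniformly along the branch; symmetrically, the global-minimality claim requires controlling every competing interior stationary point rather than a single one. I expect that without extra structure — a monotonicity of the branch in $T$, or a reformulation rendering $\perm(\beta.*(1-\beta))$ or the Bethe functional log-concave — these global steps resist a complete proof, which is presumably why the statement is posed as a conjecture rather than a theorem.
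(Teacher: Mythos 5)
First, a point of calibration: the statement you are proving is posed in the paper as a \emph{conjecture} --- the paper offers no proof of it, so there is nothing on the paper's side to match your argument against. The only question is whether your proposal actually closes the conjecture, and it does not; to your credit, you say so yourself in the final paragraph. The gaps you flag are exactly the load-bearing steps. In the first part, the implicit-function-theorem continuation from $T_c$ and the large-$T$ Hessian check at $\beta_i^j=1/N$ are routine and fine, but the global step --- ``I would argue that the interior solution persists and is unique'' on all of $(T_c,\infty)$, with no fold and no boundary collision --- is asserted without any mechanism. The Bethe functional (\ref{BFE}) is non-convex, so none of the standard tools (strict convexity, monotone operator structure) apply, and nothing in proposition \ref{prop:Tc} or equation (\ref{CritEq}) controls the linearization at finite $v$. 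Connectedness of the solution set over $(T_c,\infty)$, which your argument needs to identify the $T_c$ branch with the $T=\infty$ point, is precisely what is open.

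In the second part there is a step that fails outright as stated: you argue that at small $T$ the entropy bound forces any interior stationary point ``exponentially close to a vertex, contradicting interiority.'' That is a non sequitur --- a point with all $\beta_i^j\in(0,1)$ but exponentially close to a vertex is still interior, and indeed this is exactly the regime realized just above $T_c$ in the proof of proposition \ref{prop:Tc}, where $v_i^j=\beta_i^j(1-\beta_i^j)$ is small but positive. To rule out interior solutions below $T_c$ you would need to show the stationarity system (\ref{ds_cond}),(\ref{BP1}) is infeasible there, not merely that solutions would be near-integer. Moreover, your sketch only treats the branch attached to $\sigma_*$; branches attached to \emph{other} permutations $\sigma\neq\sigma_*$, each with its own critical temperature given by the analogue of (\ref{CritEq}), are the natural candidates for ``other non-integer solutions'' at $T<T_c$, and nothing in your plan excludes them --- you explicitly ``grant such uniqueness'' before deriving global minimality, which makes that conclusion circular. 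So the proposal is a reasonable research program (continuation plus landscape analysis is a sensible way to attack this), but it is a plan with the conjecture's actual content left as hypotheses, consistent with the paper's decision to state it without proof.
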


\section{Background (II): Loop Calculus and Series}\label{sec:LC}
Here we consider $T>T_c$ where, according to the main result of the previous section, there exists
a solution of \eqs(\ref{ds_cond},\ref{BP1}) lying in the interior of the doubly-stochastic-matrix polytope.
We assume that such a nontrivial solution of the BP equations is found.

As shown in \cite{06CCa,06CCb}, the exact partition function of a generic GM can be expressed in
terms of a LS, where each term is computed explicitly using the BP solution.
Adapting this general result to the permanent, bulky yet straightforward algebra leads to
the following exact expression for the partition function $Z$ from \eq(\ref{GM}):
 \begin{eqnarray}
Z/Z_{BP}=z_{LS}; \qquad z_{LS}\equiv 1+\sum_{C \neq \emptyset} r_C, \nonumber  \\
   r_C \equiv \!\left(\prod_{i\in C}
     (1-q_i)\right)\!\!
   \left(\prod_{j\in C} (1-q^j)\right)\!\!\prod_{(i,j)\in C}
   \frac{\beta_i^j}{1-\beta_i^j}\,.
 \label{rC}
  \end{eqnarray}
The variables $\beta$ are in accordance with
\eqs(\ref{ds_cond},\ref{BP1})
 and $C$ stands for an arbitrary generalized loop, defined as
a subgraph of the complete bipartite graph with all its vertices having a degree larger than 1.
The $q_i$ (or $q^j$) in \eq(\ref{rC}) are the $C$-dependent degrees, i.e. $q_i=\sum_{j \mid (i,j)\in C} 1$ and $q^j=\sum_{i \mid (i,j)\in C}
1$. According to \eq(\ref{rC}), those loops with an even/odd number of vertices give positive/negative
contributions $r_C$.

\section{Loop Series as a Permanent}\label{sec:Per_BP_Per}

This section, explaining the main result of the paper, is split in two parts. In subsection
\ref{subsec:main} we give a simple derivation of a very compact representation for the LS
\eq(\ref{rC}) following directly from the BFE formulation. Subsection \ref{subsec:Ihara} contains
an alternative derivation of this main formula from LS using the concept of the Ihara-Bass graph
$\zeta$-function \cite{Idiscrete,Bass}.

We also find it appropriate here to make the following general remark. Even though discussion of the manuscript is limited to permanents,  counting perfect matchings over $K_{N,N}$, all the results reported in this section allows straightforward generalizations to weighted counting of perfect matchings over arbitrary (and not necessarily bipartite) graphs.

\subsection{Permanent representation for $Z/Z_{BP}$}
\label{subsec:main}
\begin{theorem}
\label{LS_new} For any non-integer solution of the BP equations,
\eqs(\ref{ds_cond},\ref{BP1}), the following is true:
\begin{equation}
 \perm (P)/Z_{BP}= \perm (\beta.*(1-\beta)) \prod_{(i,j)\in E}(1-\beta_i^j)^{-1}, \label{perm}
\end{equation}
where $A.*B$ is the element-by-element multiplication of  the $A$ and $B$ matrices.
\end{theorem}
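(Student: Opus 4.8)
The plan is to bypass the term-by-term resummation of the loop series \eq(\ref{rC}) and instead establish \eq(\ref{perm}) directly from the Bethe free energy, exploiting the observation that both $\perm(P)$ and $Z_{BP}$ factor as one and the \emph{same} permutation-independent prefactor times a genuinely combinatorial object. The only algebraic input I would use is the stationarity relation \eq(\ref{BP1}), rewritten as $P_i^j = \beta_i^j(1-\beta_i^j)\,U_i^{-1}(U^j)^{-1}$ with $U_i\equiv e^{\mu_i}$ and $U^j\equiv e^{\mu^j}$. The non-integer assumption guarantees $\beta_i^j\in(0,1)$, so every $U_i,U^j$ is a well-defined positive number and every factor $(1-\beta_i^j)^{-1}$ is finite.

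First I would evaluate $\perm(P)$. Writing $P=D_1\,(\beta.*(1-\beta))\,D_2$ with $D_1=\diag(U_i^{-1})$ and $D_2=\diag((U^j)^{-1})$, the row/column scaling property of the permanent gives $\perm(P)=\bigl(\prod_i U_i\bigr)^{-1}\bigl(\prod_j U^j\bigr)^{-1}\perm(\beta.*(1-\beta))$. Equivalently this follows straight from $\perm(P)=\sum_\sigma\prod_i P_i^{\sigma(i)}$: for each permutation $\sigma$ the weight $\prod_i U_i^{-1}(U^{\sigma(i)})^{-1}$ collapses to the common constant $\bigl(\prod_i U_i\prod_j U^j\bigr)^{-1}$ because $\sigma$ is a bijection, and can therefore be pulled outside the sum.

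Next I would extract $Z_{BP}$ in closed form from the definition $\mathcal{F}_{BP}=-T\ln Z_{BP}$ at the stationary $\beta$. Exponentiating \eq(\ref{BFE}) yields $Z_{BP}=\prod_{(i,j)}(\beta_i^j)^{-\beta_i^j}(P_i^j)^{\beta_i^j}(1-\beta_i^j)^{1-\beta_i^j}$. Substituting the same $P_i^j=\beta_i^j(1-\beta_i^j)U_i^{-1}(U^j)^{-1}$, the powers of $\beta_i^j$ cancel and those of $(1-\beta_i^j)$ combine to a single power one, leaving $Z_{BP}=\prod_{(i,j)}(1-\beta_i^j)\cdot\prod_{(i,j)}(U_iU^j)^{-\beta_i^j}$. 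This is the one place where the doubly stochastic constraints \eq(\ref{ds_cond}) are essential: since $\sum_j\beta_i^j=1$ and $\sum_i\beta_i^j=1$, the multiplier factor telescopes to exactly $\prod_{(i,j)}(U_iU^j)^{-\beta_i^j}=\bigl(\prod_i U_i\prod_j U^j\bigr)^{-1}$, i.e. the \emph{identical} prefactor found for $\perm(P)$. Dividing the two expressions cancels this common factor and produces \eq(\ref{perm}).

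I do not anticipate a genuine obstacle, since the computation is short; the conceptual crux, and the step I would check most carefully, is exactly the cancellation above, namely that the Lagrange-multiplier weights $U_i,U^j$ enter $\perm(P)$ and $Z_{BP}$ with the same total exponent. This coincidence is not accidental but is forced by double stochasticity, which asserts precisely that $\beta$ carries unit mass on every row and column. As a consistency check I would confirm that the resulting $\perm(\beta.*(1-\beta))/\prod_{(i,j)}(1-\beta_i^j)$ reproduces the leading terms of the loop series \eq(\ref{rC})---the empty loop contributing $1$ and the shortest even generalized loops contributing the expected $r_C$---which also pins down the overall normalization.
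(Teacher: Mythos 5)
Your proposal is correct and takes essentially the same route as the paper's own proof: both substitute the stationarity relation (\ref{BP1}), i.e. $P_i^j=\beta_i^j(1-\beta_i^j)e^{-\mu_i-\mu^j}$, into the exponentiated Bethe free energy to get $Z_{BP}=\prod_{(i,j)}(1-\beta_i^j)\prod_i e^{-\mu_i}\prod_j e^{-\mu^j}$ and into $\perm(P)$ to get $\perm(\beta.*(1-\beta))\prod_i e^{-\mu_i}\prod_j e^{-\mu^j}$, then cancel the common multiplier prefactor. Your write-up merely makes explicit the step the paper leaves implicit, namely the telescoping of $\prod_{(i,j)}(U_iU^j)^{-\beta_i^j}$ via the doubly stochastic constraints (\ref{ds_cond}).
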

\begin{proof}
From the definition of the BFE, ${\cal F}_{BP}=-T\ln Z_{BP}$, and 
\eqs(\ref{ds_cond},\ref{BP1}) one derives
\begin{equation*}
\fl
 Z_{BP}
 = \hspace{-2mm} \prod_{(i,j)\in E}\left[
(1-\beta_i^j) \Big( \frac{(p_i^j)^{1/T}}{\beta_i^j(1-\beta_i^j)} \Big)^{\beta_i^j}
\right]
  = \hspace{-2mm}
\prod_{(i,j)\in E}  \hspace{-1mm} (1-\beta_i^j) \prod_i \e^{- \mu_i}  \prod_j \e^{- \mu^j}.
\end{equation*}
On the other hand \eq(\ref{BP1}) results in, $\perm (P)= \perm (\beta.*(1-\beta))$
$ \prod_i \exp (- \mu_i)  \prod_j \exp (- \mu^j)$. Combining the two formulas we arrive at \eq(\ref{perm}).
\end{proof}

\begin{remark}{\rm
Note that if one considers expanding the permanent on the rhs of \eq(\ref{perm}) over the
elements of the matrix $\beta.*(1-\beta)$,  each element of the expansion will be positive, in
the contrast with the LS of \eq(\ref{rC}). Moreover,  the number of terms in the Perm-expansion
is significantly smaller than in the original LS.
}
\end{remark}

\subsection{From LS to the permanent representations for $Z/Z_{BP}$}
\label{subsec:Ihara}

Here we discuss the relation between the two complementary representations of $Z/Z_{BP}$, i.e.
between the LS expression (\ref{rC}) and the permanent formula (\ref{perm}).
We do this in two steps, stated in the two theorems presented consequently, one relating the LS to an average
of a determinant,  and another one expressing it via the permanent of $\beta.*(1-\beta)$.

\begin{theorem}
[LS as Average of Determinant]\label{thmA1}
Let $\vec{E}$ be the set of directed edges obtained by duplicating undirected edges $E$ of $K_{N,N}$.
Define the edge-adjacency matrix $\mathcal{M}$ of the complete bipartite graph $K_{N,N}$ according to
$\mathcal{M}_{\edij,\edkl}= \delta_{l,i}(1-\delta_{j,k})$.
Let $x=(x_{\edij})_{(\edij) \in \vec{E}}$
be the set of random variables that satisfies $\E{x_{\edij}}=0$, $\E{x_{\edij}x_{\edji}}=1$ and
$\E{x_{\edij}x_{\edkl}}=0 \quad (\{i,j\} \neq \{k,l\})$. (Here and below $\E{\cdots}_x$ stands for
the mathematical expectation over the random variables $x$.) Then the following relation holds:
$z_{LS} =\E{ \det [ I - i \mathcal{B} \mathcal{M}] }_x$, where $\mathcal{B}=\diag (\sqrt{\beta_i^j/(1-\beta_i^j)} x_{\edij})$.
\end{theorem}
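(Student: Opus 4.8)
The plan is to expand the determinant over permutations of the directed‑edge index set and evaluate the expectation term by term. Writing $M=I-i\,\mathcal{B}\mathcal{M}$ with entries $M_{e,e'}=\delta_{e,e'}-i\,\mathcal{B}_{e}\mathcal{M}_{e,e'}$, I would first record two structural facts. The edge‑adjacency matrix has vanishing diagonal, $\mathcal{M}_{\edij,\edij}=\delta_{j,i}(1-\delta_{j,i})=0$, so $M_{e,e}=1$; and because $\mathcal{B}=\diag(\sqrt{\beta_i^j/(1-\beta_i^j)}\,x_{\edij})$ is diagonal, the variable $x_e$ enters only through row $e$. Hence in the Leibniz expansion
\[ \det M=\sum_{\pi}\mathrm{sgn}(\pi)\prod_{e\in\mathrm{supp}(\pi)}\bigl(-i\,\mathcal{B}_e\,\mathcal{M}_{e,\pi(e)}\bigr), \]
each $x_e$ appears at most linearly, so every monomial is squarefree. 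Taking $\E{\cdot}_x$ and using that the variables attached to distinct undirected edges are independent, the expectation factorizes over undirected edges: a factor equals $1$ when both $x_{\edij}$ and $x_{\edji}$ are present, equals $\E{x_{\edij}}=0$ when only one is present, and equals $1$ when neither is present. Consequently only those $\pi$ whose support is a \emph{doubled} subgraph, $\mathrm{supp}(\pi)=\{\edij,\edji:(i,j)\in C\}$ for some $C\subseteq E$, survive, and the moments $\E{x_{\edij}x_{\edji}}=1$ together with the diagonal weights contribute the factor $\prod_{(i,j)\in C}\beta_i^j/(1-\beta_i^j)$.

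Next I would carry out the vertex‑local reduction. For a fixed $C$ the condition $\mathcal{M}_{\edij,\edkl}=\delta_{l,i}(1-\delta_{j,k})$ forces $\pi$ to send each edge $v\to u$ outgoing from a vertex $v$ to an edge $\rho_v(u)\to v$ incoming to $v$, with $\rho_v(u)\neq u$; since $\pi$ is a bijection, this is equivalent to choosing, independently at every vertex $v\in C$, a derangement $\rho_v$ of the neighbour set $N_C(v)$ of cardinality $q_v$. In particular a vertex of degree one admits no derangement, so a nonzero contribution requires every vertex of $C$ to have degree at least two; thus only generalized loops contribute, matching exactly the summation range of the loop series \eq(\ref{rC}).

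The sign is then handled by the edge‑reversal involution $\phi(\edij)=(\edji)$, a product of $|C|$ transpositions with $\mathrm{sgn}(\phi)=(-1)^{|C|}$. Since $(\phi\circ\pi)(v\to u)=v\to\rho_v(u)$ preserves the block of edges outgoing from each $v$ and acts there as $\rho_v$, one gets $\mathrm{sgn}(\pi)=(-1)^{|C|}\prod_{v\in C}\mathrm{sgn}(\rho_v)$. Summing the signed contributions over all derangement tuples factorizes over vertices, and at each vertex
\[ \sum_{\rho_v}\mathrm{sgn}(\rho_v)\prod_{u\in N_C(v)}(1-\delta_{u,\rho_v(u)})=\det(J_{q_v}-I_{q_v})=(-1)^{q_v-1}(q_v-1), \]
with $J_{q_v}$ the all‑ones matrix. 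Collecting the global factor $(-i)^{2|C|}=(-1)^{|C|}$ together with the $(-1)^{|C|}$ from the sign of $\phi$, the two cancel, and using $\sum_{v}q_v=2|C|$ one checks
\[ \prod_{v\in C}(-1)^{q_v-1}(q_v-1)=\prod_{v\in C}(-1)^{q_v}(1-q_v)=(-1)^{2|C|}\prod_{v\in C}(1-q_v)=\prod_{v\in C}(1-q_v). \]
Splitting the vertex product over the two sides of $K_{N,N}$ as $\prod_{i\in C}(1-q_i)\prod_{j\in C}(1-q^j)$, the contribution of $C$ is exactly $r_C$; summing over all generalized loops, with the empty loop giving the leading $1$, yields $z_{LS}$. (This is a weighted bipartite instance of the Bass/Ihara–Bass edge‑zeta determinant identity.)

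I expect the main obstacle to be the sign bookkeeping: pinning down $\mathrm{sgn}(\pi)$ in terms of the vertex derangements and verifying that the several factors of $(-1)^{|C|}$ conspire to reproduce the (sign‑definite) vertex factors $(1-q_v)$ of \eq(\ref{rC}). A secondary but genuine point is the justification that the expectation of a squarefree monomial factorizes over undirected edges and is fixed by the stated first and second moments alone; this is automatic once the pairs $(x_{\edij},x_{\edji})$ for distinct edges are taken independent (in particular in the jointly Gaussian realization with the prescribed covariances), which is the natural reading of the hypotheses, and it suffices because each undirected edge can supply at most the single product $x_{\edij}x_{\edji}$, never a higher power.
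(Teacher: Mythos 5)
Your proposal is correct, but it reaches the result by a genuinely different route than the paper. The paper expands $\det[I-i\mathcal{B}\mathcal{M}]$ over \emph{principal minors}, $\det(I+A)=\sum_{S}\det A|_S$, so that after taking the expectation (which kills all subsets that are not doubled edge sets) everything reduces to evaluating $\det\mathcal{M}|_C$ for the doubled subgraph of a generalized loop $C$; this evaluation is imported wholesale from the Ihara--Bass formula \eq(\ref{IB}) in the $u\to\infty$ limit, giving $\det\mathcal{M}|_C=(-1)^{|C|}\prod_{v\in C}(1-q_v)$, whence $r_C$. You instead expand via the Leibniz formula over permutations $\pi$ of $\vec{E}$, decompose each surviving $\pi$ into vertex-local derangements $\rho_v$ of the neighbour sets, and compute the vertex factor by hand as $\sum_{\rho_v}\mathrm{sgn}(\rho_v)=\det(J_{q_v}-I_{q_v})=(-1)^{q_v-1}(q_v-1)$, with the edge-reversal involution $\phi$ supplying the $(-1)^{|C|}$ that cancels $(-i)^{2|C|}$. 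In effect you have given an elementary, self-contained proof of exactly the determinant identity the paper obtains from the zeta function, so your argument does not rely on \cite{Idiscrete,Bass} at all (and incidentally sidesteps a typo in the paper's quoted formula, where $(1-u)^{|E|-|V|}$ should read $(1-u^2)^{|E|-|V|}$ for the degree count in the $u\to\infty$ limit to work out). What the paper's route buys is brevity modulo citation and a natural segue into the next theorem, which genuinely needs the multivariate Ihara--Bass machinery of \cite{10WF}; what yours buys is transparency about where the factors $(1-q_v)$ and the restriction to generalized loops come from (degree-one vertices admit no derangement), plus an honest handling of a point the paper glosses over: the stated first and second moments alone do not determine expectations of degree-$2|C|$ monomials, so one must additionally assume the pairs $(x_{\edij},x_{\edji})$ attached to distinct undirected edges are independent (as in the Rademacher realization the paper later uses), which you correctly flag and justify.
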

\begin{proof}
For a general undirected graph $G$, the Ihara-Bass formula \cite{Idiscrete,Bass} states that
\begin{equation}
\zeta_{G}^{-1}(u)= \det[I-u \mathcal{M}]= (1-u)^{|E|-|{V}|} \det[I
+u^2(\mathcal{D}-I)-u\mathcal{A} ], \label{IB}
\end{equation}
where $\mathcal{A}$ is the adjacency matrix and $\mathcal{D}=\diag{(q_i;i\in V)}$ is the degree
matrix of $G$. If we take the limit $u \rightarrow \infty$, this formula implies
$\det{\mathcal{M}}=(-1)^{|E|} \prod_{i \in V}(1-q_i)$.
Expanding the determinant, one derives
\begin{equation}
 \det [ I - i \mathcal{B} \mathcal{M}]
=\sum_{\{ e_1,\ldots,e_n \} \subset \vec{E} } \det \mathcal{M}|_{ \{ e_1,\ldots,e_n \} }
(-i)^{k} \prod_{l=1}^n (\mathcal{B})_{e_l,e_l}. \label{DetExpansion}
\end{equation}
Evaluating the expectation of each summand in \eq(\ref{DetExpansion}), one observes that it is nonzero
only if $(\edij)  \in \{ e_1,\ldots,e_n \}$ implies $(\edji) \in \{ e_1,\ldots,e_n \}$,  thus
arriving at
\begin{equation*}
\fl
 \E{ \det [ I - i \mathcal{B} \mathcal{M}] }_x
= \sum_{C \subset E}   (-1)^{|C|}  \det \mathcal{M}|_C  \prod_{(i,j) \in C}
\frac{\beta_i^j}{1-\beta_i^j} = 1+\sum_{\emptyset \neq C \subset E} r_C.
\end{equation*}
\end{proof}

\begin{theorem}[From LS to Permanent]
For the doubly stochastic matrix of BP beliefs, $\beta$, and LS defined in \eq(\ref{rC}), one
derives
 \begin{equation*}
  z_{LS}= \perm (\beta.*(1-\beta)) \prod_{(i,j)\in E}(1-\beta_i^j)^{-1}.
 \end{equation*}
\end{theorem}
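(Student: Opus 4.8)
The plan is to evaluate the right-hand side of the determinantal identity just proved, $z_{LS}=\E{\det[I-i\mathcal{B}\mathcal{M}]}_x$, by collapsing the determinant from the $2N^2$-dimensional directed-edge space down to the $2N$-dimensional vertex space of $K_{N,N}$. This is precisely a multivariate (edge-weighted) refinement of the Ihara--Bass formula (\ref{IB}): whereas (\ref{IB}) carries a scalar $u$, here the diagonal weights $i\mathcal{B}_{\edij,\edij}$ vary from edge to edge, so I would first establish the weighted reduction and then use the doubly-stochastic constraints (\ref{ds_cond}) to simplify the resulting $2N\times 2N$ vertex determinant.

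For the reduction I would write the non-backtracking matrix as $\mathcal{M}=S^{\top}T-P$, where $S_{v,\edij}=\delta_{v,i}$ and $T_{v,\edij}=\delta_{v,j}$ are the tail/head incidence matrices and $P_{\edij,\edkl}=\delta_{k,j}\delta_{l,i}$ is the reversal involution pairing $\edij$ with $\edji$; a direct check gives $(S^{\top}T-P)_{\edij,\edkl}=\delta_{l,i}(1-\delta_{j,k})$, matching $\mathcal{M}$. Setting $U=i\mathcal{B}=\diag(u_{\edij})$ and factoring $I-U\mathcal{M}=(I+UP)-US^{\top}T$, a Schur-complement/Sylvester manipulation yields
\begin{equation*}
\det[I-U\mathcal{M}]=\det(I+UP)\,\det\!\big(I_{2N}-T(I+UP)^{-1}US^{\top}\big).
\end{equation*}
Since $P$ is block-diagonal over the pairs $\{\edij,\edji\}$, the first factor is $\prod_{(i,j)\in E}(1-u_{\edij}u_{\edji})$, and $(I+UP)^{-1}U$ is obtained block-by-block on each such pair.

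I would then pick a convenient admissible choice of the random variables, namely $x_{\edij}=x_{\edji}=s_{ij}$ with independent $\pm1$ signs $s_{ij}$; this satisfies $\E{x_{\edij}}=0$, $\E{x_{\edij}x_{\edji}}=1$ and $\E{x_{\edij}x_{\edkl}}=0$ for $\{i,j\}\neq\{k,l\}$, so the previous theorem applies. With it $u_{\edij}u_{\edji}=-\beta_i^j/(1-\beta_i^j)$ becomes deterministic, so the first factor collapses to exactly the target prefactor $\prod_{(i,j)\in E}(1-\beta_i^j)^{-1}$. Passing $\beta_i^j/(1-\beta_i^j)$ through the block inverse, the diagonal entries of the vertex matrix become $1-\sum_j\beta_i^j$ and $1-\sum_i\beta_i^j$, which vanish by double-stochasticity (\ref{ds_cond}); the vertex matrix therefore has zero diagonal blocks and off-diagonal blocks $-iG$ and $-iG^{\top}$ with $G_{ij}=\sqrt{\beta_i^j(1-\beta_i^j)}\,s_{ij}$, so its determinant equals $(\det G)^2$.

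It then remains to take the expectation. Writing $(\det G)^2=\sum_{\pi,\pi'}\mathrm{sgn}(\pi)\mathrm{sgn}(\pi')\prod_i g_{i\pi(i)}g_{i\pi'(i)}\,s_{i\pi(i)}s_{i\pi'(i)}$ with $g_{ij}=\sqrt{\beta_i^j(1-\beta_i^j)}$, independence of the signs forces $\pi=\pi'$ (any discrepancy leaves some $s_{ij}$ at odd degree), whence $\E{(\det G)^2}_x=\sum_{\pi\in\permutations}\prod_i\beta_i^{\pi(i)}(1-\beta_i^{\pi(i)})=\perm(\beta.*(1-\beta))$; multiplying by the prefactor gives the claim. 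The hard part will be the weighted Ihara--Bass reduction itself: setting up the incidence/reversal decomposition so the non-commuting diagonal weights are handled correctly, and checking that the denominators $1-u_{\edij}u_{\edji}$ conspire with the doubly-stochastic constraints to annihilate the diagonal blocks. Once the vertex matrix is in purely off-diagonal block form, the permanent drops out immediately from $\E{(\det G)^2}_x$.
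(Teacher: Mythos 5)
Your proposal is correct and takes essentially the same route as the paper: the paper's proof likewise specializes theorem \ref{thmA1} to $\pm 1$ signs with $x_{\edij}=x_{\edji}$, invokes a multivariate Ihara--Bass formula (cited from \cite{10WF}) to collapse $\det[I-i\mathcal{B}\mathcal{M}]$ to the bipartite block determinant with off-diagonal blocks $\sqrt{\beta.*(1-\beta)}.*x$ times the prefactor $\prod_{(i,j)}(1-\beta_i^j)^{-1}$, and finishes with the Godzil--Gutman-type expectation $\E{(\det G)^2}_x=\perm(\beta.*(1-\beta))$. The only difference is that where the paper cites the multivariate Ihara--Bass identity as a known result, you supply its proof via the decomposition $\mathcal{M}=S^{\top}T-P$, Sylvester's identity, and the $2\times 2$ block inversion --- a correct and self-contained filling-in of the one step the paper outsources, including the right use of double stochasticity to kill the diagonal blocks.
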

\begin{proof}
We use theorem \ref{thmA1}, choosing the random variables $x_i^j=x_{\edij}=x_{\edji}$
that take $\pm 1$ values with probability $1/2$.
We also utilize a multivariate version of
the Ihara-Bass formula from \cite{10WF} to derive the following expression for $z_{LS}$ proving the theorem
\begin{eqnarray*}
\fl
\det [ I - i \mathcal{B} \mathcal{M}]  =
\begin{small}
\det
\left[\begin{array}{cc}
 0& \sqrt{\beta.*(1-\beta) }.*x   \\
(\sqrt{\beta.*(1-\beta) }.*x)^{T}& 0 \\
\end{array}\right]\end{small}
\prod_{(i,j) \in E}(1-\beta_i^j)^{-1}, \\
\fl
 z_{LS}
= \E{\det(\sqrt{\beta.*(1-\beta) }.*x)^2}_x \prod_{(i,j)}(1-\beta_i^j)^{-1}
 = \perm (\beta.*(1-\beta)) \prod_{(i,j)}(1-\beta_i^j)^{-1}.\nonumber
\end{eqnarray*}
\end{proof}

\section{Invariance of the Gurvits-van der Waerden lower bound and new Lower Bounds for the Permanent}
\label{sec:low}

Van der Waerden \cite{26vdW} conjectured that the minimum of the permanent over the doubly
stochastic matrices is $N^N/N!$, and it is only attained when all entries of the matrix are
$1/N$. Though the conjecture appears to be simple, it remained open for over fifty years before
Falikman \cite{81Fal} and Egorychev \cite{81Ego} finally proved it.  Recently Gurvits
\cite{08Gur} found an alternative, surprisingly short and elegant proof, that also allowed a
number of unexpected extensions of the Van der Waerden conjecture. We call it the Gurvits-van
der Waerden theorem. (See e.g. \cite{09LS}.) A simplified form of this theorem is as follows.
\begin{theorem}[Gurvits-van der Waerden theorem \cite{08Gur,09LS}]
\label{T:Gurvits} For an arbitrary non-negative $N \times N$ matrix $A$,
\begin{eqnarray}
\fl
\perm (A) \geq \Mcap(p_A) \frac{N^N}{N!},\ \ \mbox{where}\ \
p_A(x) \equiv \prod_i \sum_j a_{i,j}x_j, \ \   \Mcap (p_A) \equiv \inf_{x \in \mathbb{R}^N_{>0}}
\frac{p_A(x)}{\prod_j x_j}.  \nonumber
\end{eqnarray}
\end{theorem}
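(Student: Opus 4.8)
The plan is to prove the bound by Gurvits's capacity (polarization) method, which underlies \cite{08Gur,09LS}. The starting point is the observation that $\perm(A)$ is precisely the coefficient of the rainbow monomial $x_1\cdots x_N$ in the product polynomial $p_A$, equivalently $\perm(A)=\left.\partial_{x_1}\cdots\partial_{x_N}p_A\right|_{x=0}$, since every other monomial of degree $N$ in $N$ variables omits at least one variable and is annihilated by the corresponding derivative. The structural fact I would exploit throughout is that $p_A=\prod_i\sum_j a_{ij}x_j$, being a product of linear forms with non-negative coefficients, is H-stable (real stable): it has no zero when all variables lie in the open right half-plane. Capacity is tailored to extract this rainbow coefficient while losing only a controlled multiplicative factor at each variable-elimination step.

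The engine is a one-variable-at-a-time reduction. Setting $p^{(N)}=p_A$, I would define for $k=N,N-1,\ldots,1$ the polynomial $p^{(k-1)}(x_1,\ldots,x_{k-1})=\left.\partial_{x_k}p^{(k)}\right|_{x_k=0}$, homogeneous of degree $k-1$ in $k-1$ variables. Two bookkeeping facts must be verified: (a) the rainbow coefficient is preserved, so the coefficient of $x_1\cdots x_{k-1}$ in $p^{(k-1)}$ equals that of $x_1\cdots x_k$ in $p^{(k)}$, whence after the reductions $p^{(1)}=\perm(A)\,x_1$ and $\Mcap(p^{(1)})=\perm(A)$; and (b) H-stability is inherited under both $\partial_{x_k}$ and the substitution $x_k=0$, a standard closure property of real stable polynomials. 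The heart of the argument is the per-step capacity estimate $\Mcap(p^{(k-1)})\ge\bigl(\tfrac{k-1}{k}\bigr)^{k-1}\Mcap(p^{(k)})$.

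That estimate reduces, by freezing $x_1,\ldots,x_{k-1}>0$ and viewing $p^{(k)}$ as a univariate polynomial $\phi$ in $x_k$, to a sharp scalar inequality: for a degree-$d$ polynomial $\phi(t)=\sum_{m\ge0}a_mt^m$ with non-negative coefficients and only real roots, $a_1\ge\bigl(\tfrac{d-1}{d}\bigr)^{d-1}\inf_{t>0}\phi(t)/t$, applied with $d=k$. Proving this extremal bound --- identifying the worst-case real-rooted polynomial and optimizing it by elementary calculus --- is the \emph{main obstacle}; everything else is bookkeeping together with the closure properties of stability. With the per-step bound in hand I would telescope over $k=N,\ldots,2$, obtaining $\perm(A)=\Mcap(p^{(1)})\ge\Mcap(p_A)\prod_{k=2}^N\bigl(\tfrac{k-1}{k}\bigr)^{k-1}$, and the product collapses to the van der Waerden constant $\prod_{k=2}^N\bigl(\tfrac{k-1}{k}\bigr)^{k-1}=N!/N^N$, which is the lower bound asserted in the theorem. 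As a consistency check I would note that for doubly stochastic $A$ the infimum defining $\Mcap(p_A)$ is attained at $x=\mathbf{1}$, since $\log p_A(e^y)-\sum_j y_j$ is convex with a critical point at $y=0$; hence $\Mcap(p_A)=1$ and the bound specializes to $\perm(A)\ge N!/N^N$, with equality at $A=\tfrac1N J$, where $p_A=N^{-N}(\sum_j x_j)^N$ and AM--GM give $\Mcap(p_A)=1$.
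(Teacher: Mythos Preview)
The paper does not supply a proof of this theorem: it is stated as a cited result from \cite{08Gur,09LS}, with no argument beyond the attribution. There is therefore nothing in the paper to compare your proposal against.

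That said, your outline is exactly Gurvits's capacity argument as presented in the cited references: realize $\perm(A)$ as the rainbow coefficient of the H-stable product polynomial $p_A$, peel off variables one at a time via $p^{(k-1)}=\partial_{x_k}p^{(k)}|_{x_k=0}$, invoke closure of real stability under differentiation and specialization, and control the capacity drop at each step by the scalar lemma for real-rooted univariate polynomials with non-negative coefficients. The telescoping and the identification $\prod_{k=2}^N\bigl(\tfrac{k-1}{k}\bigr)^{k-1}=N!/N^N$ are correct, as is your sanity check for doubly stochastic $A$. One small remark: when you apply the scalar lemma with $d=k$, the actual degree of $p^{(k)}$ in $x_k$ may be strictly less than $k$; since $\bigl(\tfrac{d-1}{d}\bigr)^{d-1}$ is decreasing in $d$, using $d=k$ is still a valid (possibly slightly wasteful) bound, so the telescoped product remains $N!/N^N$.

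You may also want to note that the constant in the paper's displayed statement is inverted: the bound should read $\perm(A)\ge\Mcap(p_A)\cdot N!/N^N$, consistent with your derivation and with the paper's own Corollary~\ref{Gurvits_bound}, rather than $N^N/N!$.
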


We have found that the lower bound of the theorem \ref{T:Gurvits} has a ``good'' property with
respect to the BP transformation. As stated in theorem \ref{LS_new}, BP transforms the permanent to
another permanent. Therefore, applying theorem \ref{T:Gurvits} to both sides of \eq(\ref{perm}), one
naturally asks how do the two lower bounds compare? A somewhat surprising result is that the
Gurvits-van der Waerden theorem is invariant with respect to the BP transformation. Namely,
$\Mcap(p_P)= Z_{BP}* \Mcap(p_{\beta.*(1-\beta)}) \prod_{(i,j)\in E}(1-\beta_i^j)^{-1}$.
The lower bound for $\perm (\beta.*(1-\beta))$ based on the theorem \ref{T:Gurvits} is
\begin{corollary}
\label{Gurvits_bound}
\begin{equation*}
\perm (\beta.*(1-\beta))\geq \frac{N!}{N^N} \prod_{(i,j)\in E}(1-\beta_i^j)^{\beta_i^j}
\end{equation*}
\end{corollary}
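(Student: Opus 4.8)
The plan is to apply the Gurvits--van der Waerden theorem (Theorem~\ref{T:Gurvits}) directly to the non-negative matrix $A=\beta.*(1-\beta)$, whose $(i,j)$ entry is $\beta_i^j(1-\beta_i^j)$, and then to collapse the whole statement to a single capacity estimate. Concretely, Theorem~\ref{T:Gurvits} bounds $\perm(\beta.*(1-\beta))$ from below by $\Mcap(p_{\beta.*(1-\beta)})$ times the prefactor of that theorem, which is exactly the $N!/N^N$ appearing in the corollary. Hence it suffices to prove the capacity lower bound
\[
\Mcap(p_{\beta.*(1-\beta)}) = \inf_{x\in\mathbb{R}^N_{>0}} \frac{\prod_i \sum_j \beta_i^j(1-\beta_i^j)\,x_j}{\prod_j x_j} \ \geq\ \prod_{(i,j)\in E}(1-\beta_i^j)^{\beta_i^j}.
\]

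The key step is a single application of the weighted AM--GM inequality, exploiting the doubly stochastic structure (\ref{ds_cond}) of $\beta$. Fix any $x\in\mathbb{R}^N_{>0}$. For each row $i$ the numbers $(\beta_i^j)_j$ are non-negative and sum to one, so they serve as AM--GM weights; applying the weighted inequality with the values $(1-\beta_i^j)x_j$ gives, for every $i$,
\[
\sum_j \beta_i^j\,(1-\beta_i^j)\,x_j \ \geq\ \prod_j \bigl[(1-\beta_i^j)\,x_j\bigr]^{\beta_i^j}.
\]
Taking the product over $i$ and separating the two types of factors, the right-hand side becomes $\bigl(\prod_{(i,j)\in E}(1-\beta_i^j)^{\beta_i^j}\bigr)\,\prod_j x_j^{\sum_i \beta_i^j}$. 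Here the column constraints $\sum_i \beta_i^j = 1$ from (\ref{ds_cond}) collapse the $x$-dependent factor to precisely $\prod_j x_j$, so that $p_{\beta.*(1-\beta)}(x) \geq \bigl(\prod_{(i,j)\in E}(1-\beta_i^j)^{\beta_i^j}\bigr)\prod_j x_j$. Dividing by $\prod_j x_j$ yields the asserted bound for every $x$, and since the right-hand side no longer depends on $x$, taking the infimum delivers the capacity estimate. Combining it with Theorem~\ref{T:Gurvits} gives the corollary.

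I expect no serious obstacle here: the one point requiring care is that the argument uses $\beta_i^j\in(0,1)$ strictly, which is guaranteed because we are working at a non-integer (interior) BP solution of \eqs(\ref{ds_cond},\ref{BP1}), so all exponents $\beta_i^j$ and all bases $1-\beta_i^j$ are positive and the AM--GM step is legitimate. The reason the bound comes out so cleanly is structural rather than computational: the row-stochasticity of $\beta$ supplies exactly the right AM--GM weights, while the column-stochasticity makes the $x$-dependence cancel identically, so that the ratio $p_A(x)/\prod_j x_j$ is bounded below by a constant \emph{uniformly} in $x$ rather than only at a minimizer. This uniformity is precisely what lets a pointwise inequality pass to the infimum without any further work.
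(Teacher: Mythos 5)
Your proposal is correct and takes essentially the same route as the paper: the paper's one-line proof applies Theorem \ref{T:Gurvits} to $\beta.*(1-\beta)$ together with exactly the inequality $\sum_j \beta_i^j(1-\beta_i^j)x_j \geq \prod_j \bigl[(1-\beta_i^j)x_j\bigr]^{\beta_i^j}$, which is your weighted AM--GM step with row weights $\beta_i^j$. Your write-up merely spells out what the paper leaves implicit --- in particular the use of column stochasticity to cancel the $x$-dependence --- and correctly reads the Gurvits prefactor as $N!/N^N$ despite the inverted fraction in the paper's statement of the theorem.
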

\begin{proof}
This bound is the result of a direct application of the inequality, $\sum_j \beta_i^j(1-\beta_i^j)x_j
\geq \prod_j \left[(1-\beta_i^j)x_j \right]^{\beta_i^j}$, to theorem \ref{T:Gurvits}.
\end{proof}

We also obtain another lower bound which improves the bound of corollary \ref{Gurvits_bound}
at sufficiently low values of the temperature. See figure~\ref{fig}c for an
illustration.
\begin{theorem}
\label{second_low} For an arbitrary perfect matching $\Pi$ (permutation of $\{1,\ldots,N\}$),
\begin{equation*}
 \perm (\beta.*(1-\beta)) \geq
2 \prod_{i} \beta_i^{\Pi(i)}(1-\beta_i^{\Pi(i)})
\end{equation*}
\end{theorem}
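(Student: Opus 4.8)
The plan is to normalize the statement and then reduce everything to a single scalar inequality for the matrix $M:=\beta.*(1-\beta)$. First I would remove the arbitrary matching: permuting the columns of $\beta$ by $\Pi$ produces another doubly stochastic matrix $\beta'$ with $\beta'^{\,j}_i=\beta_i^{\Pi(j)}$, hence $M'_{ij}=M_{i\Pi(j)}$, so that $\perm(M')=\perm(M)$ while $\prod_i M'_{ii}=\prod_i\beta_i^{\Pi(i)}(1-\beta_i^{\Pi(i)})$. Thus it suffices to prove $\perm(M)\ge 2\prod_i M_{ii}$ for every doubly stochastic $\beta$, i.e. to treat $\Pi=\mathrm{id}$.

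Next I would isolate the diagonal term. Grouping the matchings $\sigma$ that contribute to $\perm(M)$ by the set $T$ of indices they do not fix (so $\sigma$ restricts to a derangement of $T$) gives the fixed‑point expansion $\perm(M)=\sum_{T}\big(\prod_{i\notin T}M_{ii}\big)\,\mathrm{der}(M[T])$, where $M[T]$ is the principal submatrix on $T$ and $\mathrm{der}$ denotes the sum over its derangements. The term $T=\emptyset$ is exactly $\prod_i M_{ii}$, so the theorem is equivalent to the statement that the remaining, purely cyclic contributions supply one more full copy of $\prod_i M_{ii}$. Dividing through by $\prod_i M_{ii}$ and setting $R_{ij}:=M_{ij}/\sqrt{M_{ii}M_{jj}}$ (so $R_{ii}=1$), this is precisely $\perm(R)\ge 2$.

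The doubly‑stochastic origin of $M$ is essential here, since $\perm(R)\ge 2$ is false for a general unit‑diagonal nonnegative matrix (for instance $R=I$ gives $1$). My plan is to feed in the identities $1-\beta_i^i=\sum_{j\ne i}\beta_i^j=\sum_{k\ne i}\beta_k^i$, rewriting $\prod_i M_{ii}=\prod_i\big(\beta_i^i\sum_{j\ne i}\beta_i^j\big)$ and re‑summing the resulting off‑diagonal ``flow'' into genuine cyclic weights of $M$, so as to exhibit a second copy of $\prod_i M_{ii}$ among the $T\ne\emptyset$ terms. Equivalently, one can package the diagonal term together with the directed‑cycle terms through the determinant representation of $z_{LS}$ developed in subsection \ref{subsec:Ihara}, treating $\prod_i M_{ii}$ as the ``diagonal'' part and the loop contributions as the complementary weight; conservation of the row and column mass of $\beta$ is then what forces the loop part to match the diagonal part.

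The hard part is that the bound is simultaneously tight and non‑local. It is tight: for the homogeneous model of Example \ref{example1} with $N=3$ evaluated at $T=T_c$ one gets $\perm(M)=2\prod_i M_{ii}$ exactly. It is also termwise false: near $T_c$ the diagonal product $\prod_i M_{ii}$ strictly dominates every single competing matching weight $\prod_i M_{i\sigma(i)}$, so no argument of the form ``identity plus one other matching'' can succeed, and even truncating the derangement sum at transpositions is insufficient (their aggregate $\sum_{k<l}M_{kl}M_{lk}\prod_{i\ne k,l}M_{ii}$ can fall strictly below $\prod_i M_{ii}$). Consequently the factor $2$ only materializes once the entire cyclic sum is controlled at once, and the doubly‑stochastic constraint must be used globally rather than edge by edge. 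Turning the order‑by‑order balance valid near $T_c$ into a bound holding throughout the interior regime $T>T_c$ is the main obstacle I expect to confront.
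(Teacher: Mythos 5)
Your reduction to $\Pi=\mathrm{id}$ and the rescaled target $\perm(R)\geq 2$ are correct, and your structural observations are accurate and even sharp (the bound is indeed termwise false near $T_c$: the transposition sum alone recovers only a fraction $N/(2(N-1))$ of the needed second copy of the diagonal product). But the proposal stops exactly where the proof has to happen: neither the ``flow re-summation'' nor the $z_{LS}$-determinant packaging is carried out, and you concede the main step as an obstacle you ``expect to confront.'' So there is a genuine gap --- nothing in the proposal actually produces the factor $2$. (Also, your tightness check at $T=T_c$ is degenerate, since there $\beta$ is integer and both sides vanish; the equality is meaningful only to leading order as $T\downarrow T_c$.)

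The paper closes the gap in two short moves that sidestep precisely the difficulty you identified. First, an entrywise comparison using only row stochasticity: for $\sigma(i)\neq i$ one has $1-\beta_i^{\sigma(i)}=\sum_{j\neq\sigma(i)}\beta_i^j\geq\beta_i^i$, so every matching weight obeys $\prod_i M_{i\sigma(i)}\geq\prod_i\beta_i^i\,\prod_i X_{i\sigma(i)}$, where $X_{ii}=1-\beta_i^i$ and $X_{ij}=\beta_i^j$ for $i\neq j$; summing over $\sigma$ gives $\perm(M)\geq\prod_i\beta_i^i\,\perm(X)$. Second --- and this is the global linear identity you were groping for with ``conservation of row and column mass'' --- the sign-flipped matrix $Y$ with $Y_{ii}=\beta_i^i-1$ and $Y_{ij}=\beta_i^j$ has vanishing row sums, hence $\det Y=0$; isolating the identity-permutation term in the expansion of $\det Y$ and bounding the remaining terms by their absolute values yields
\begin{equation*}
\prod_i(1-\beta_i^i)\;=\;\Bigl|\sum_{\sigma\neq\mathrm{id}}\mathrm{sgn}(\sigma)\prod_i Y_{i\sigma(i)}\Bigr|\;\leq\;\perm(X)-\prod_i(1-\beta_i^i),
\end{equation*}
i.e.\ $\perm(X)\geq 2\prod_i(1-\beta_i^i)$, and the two moves combine to give the theorem. (The paper states this tersely as ``$\det X=0$''; read it as the determinant of $X$ with its diagonal negated, which is where the stochastic constraint enters.) Note that the factor $2$ arises from a single exact sign cancellation --- a vanishing determinant --- rather than from controlling the cyclic series order by order, so no continuation argument from $T_c$ into the interior regime is needed; moreover only row stochasticity is used, so the inequality holds more generally than your analysis suggested.
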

\begin{proof}
Without loss of generality, we assume that $\Pi$ is the identity permutation. From the positivity
of entries and \eq(\ref{ds_cond}), we have $\perm (\beta.*(1-\beta)) \geq \prod_i \beta_i^i \perm (X)$,
where $X_{ij}=\delta_{i,j}+(1-2\delta_{i,j})\beta_i^j$.
Since $\beta$ is a stochastic matrix, $\det X=0$, and thus $\perm (X) \geq 2 \prod_i
(1-\beta_i^i)$.
\end{proof}
Note,  for the sake of completeness, that a comprehensive review of other bounds on permanents of specialized matrices (for example $0,1$ matrices) can be found in \cite{86LP}.

\section{New Upper Bound for Permanent}
\label{sec:up}

\begin{proposition}
\label{up}
\begin{equation*}
 \perm (\beta.*(1-\beta)) \leq  \prod_j({1-\sum_i (\beta_i^j)^2}).
\end{equation*}
\label{T:upper}
\end{proposition}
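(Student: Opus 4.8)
The plan is to combine the two tools announced for this section, the Godzil-Gutman estimator and the Hadamard inequality. First I would recall the Godzil-Gutman formula: for any non-negative $N\times N$ matrix $A$, if one forms the random matrix $B$ with entries $B_i^j = x_i^j \sqrt{A_i^j}$, where the $x_i^j$ are independent signs taking values $\pm 1$ with probability $1/2$ (the same variables as in subsection \ref{subsec:Ihara}), then $\perm(A) = \E{(\det B)^2}_x$. The matrix $A = \beta.*(1-\beta)$ is non-negative because the doubly-stochastic constraints \eq(\ref{ds_cond}) force $0\le \beta_i^j \le 1$, so the formula applies and gives $\perm(\beta.*(1-\beta)) = \E{(\det B)^2}_x$ with $B_i^j = x_i^j \sqrt{\beta_i^j(1-\beta_i^j)}$.

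Next I would bound the integrand pointwise, before taking the expectation. For every fixed realization of the signs, Hadamard's inequality applied to the columns of $B$ gives $(\det B)^2 \le \prod_j \sum_i (B_i^j)^2$. Since $(x_i^j)^2 = 1$, the right-hand side equals $\prod_j \sum_i \beta_i^j(1-\beta_i^j)$, which no longer depends on the signs. Because this is a deterministic upper bound valid for every sign configuration, it passes through the averaging untouched, so $\perm(\beta.*(1-\beta)) = \E{(\det B)^2}_x \le \prod_j \sum_i \beta_i^j(1-\beta_i^j)$.

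Finally I would simplify each column sum using the column-stochasticity in \eq(\ref{ds_cond}): $\sum_i \beta_i^j(1-\beta_i^j) = \sum_i \beta_i^j - \sum_i (\beta_i^j)^2 = 1 - \sum_i (\beta_i^j)^2$, which yields exactly the claimed product $\prod_j(1 - \sum_i(\beta_i^j)^2)$. There is no deep obstacle here; the only point requiring attention is the observation that the Hadamard bound is sign-independent and hence survives the expectation trivially, so that the probabilistic Godzil-Gutman representation is used merely to linearize the permanent into a determinant squared. I would also note that applying Hadamard to the rows of $B$ instead produces the companion bound $\perm(\beta.*(1-\beta)) \le \prod_i(1 - \sum_j(\beta_i^j)^2)$ by row-stochasticity, so the tighter of the two could be reported if desired.
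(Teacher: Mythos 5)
Your proof is correct and follows essentially the same route as the paper: the Godzil-Gutman representation $\perm(\beta.*(1-\beta)) = \E{\det(\sqrt{\beta.*(1-\beta)}.*\sigma)^2}_\sigma$ combined with Hadamard's inequality applied pointwise (the paper phrases the norm computation in terms of rows/columns indexed by $j$, exactly matching your column version), with the doubly-stochastic constraint \eq(\ref{ds_cond}) giving $\sum_i \beta_i^j(1-\beta_i^j) = 1-\sum_i(\beta_i^j)^2$. Your explicit remark that the Hadamard bound is sign-independent and hence survives the expectation, and your companion row-wise bound, are fine additions but do not change the argument.
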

\begin{proof}
We use the Godzil-Gutman representation for permanents \cite{81GG}
\begin{equation}
 \perm(\beta.*(1-\beta)) = \E{ \det(\sqrt{\beta.*(1-\beta)}.*\sigma)^2 }_\sigma,
 \label{GG}
\end{equation}
where $\sigma_i^j=\pm 1$, with $i,j=1,\ldots,N$ are independent random variables taking values
$\pm 1$ equal probability. Each row of the matrix $\sqrt{\beta.*(1-\beta)}.*\sigma$ has the squared
Euclid norm ${\sum_i \beta_i^j (1-\beta_i^j)} = {1-\sum_i (\beta_i^j)^2}$. Therefore, the upper
bound is obtained from the Hadamard inequality, $|\det (a_1,\ldots,a_n) | \leq \norm{a_1}\cdots
\norm{a_n}$.
\end{proof}

\section{Path Forward}
\label{sec:path}

We consider this study to be the beginning of further research along the following lines:
(1) More detailed analysis of the BP solution. In particular,  study of $T_c$, e.g. concerning its dependence on the matrix size; analysis
of the BP solution dependence on temperature; and the construction of an iterative algorithm provably convergent to a nontrivial BP solution
for $T>T_c$. (2) Explanation of the BP invariance with respect to the Gurvits-van der Warden lower bound. (3) Development of a deterministic and/or randomized polynomial algorithm for estimating the permanent with provable guarantees based on the loop calculus expression. (4) Numerical tests of the lower and upper bounds for realistic large scale problems.

\ack
We are thankful to Leonid Gurvits for educating us, through his course of Lectures given at
CNLS/LANL, about existing approaches in the ``mathematics of the permanent''. YW acknowledges
support of the Students Visit Abroad Program of the Graduate University for Advanced Studies
which allowed him to spend two months at LANL and he is also grateful to CNLS at LANL for its
hospitality. Research at LANL was carried out under the auspices of the National Nuclear
Security Administration of the U.S. Department of Energy at Los Alamos National Laboratory under
Contract No. DE C52-06NA25396. MC also acknowledges partial support of NMC via the NSF collaborative
grant, CCF-0829945, on ``Harnessing Statistical Physics for Computing and Communications''.

\section*{References}
\bibliographystyle{iopart-num}
\bibliography{permanent,BP_review,zeta,MishaPapers}

\providecommand{\newblock}{}
\begin{thebibliography}{10}
\expandafter\ifx\csname url\endcsname\relax
  \def\url#1{{\tt #1}}\fi
\expandafter\ifx\csname urlprefix\endcsname\relax\def\urlprefix{URL }\fi
\providecommand{\eprint}[2][]{\url{#2}}

\bibitem{10CKKVZ}
Chertkov M, Kroc L, Krzakala F, Vergassola M and Zdeborova L 2010 {\em
  Proceedings of National Academy of Sciences\/} {\bf 107}(17) 7663--7668
  (\textit{Preprint} \eprint{arxiv:0909.4256})

\bibitem{79Val}
Valiant L 1979 {\em Theoretical Computer Science\/} {\bf 8} 189–201

\bibitem{04JSV}
Jerrum M, Sinclair A and Vigoda E 2004 {\em J. ACM\/} {\bf 51} 671--697 ISSN
  0004-5411

\bibitem{08CKV}
Chertkov M, Kroc L and Vergassola M 2008 Belief propagation and beyond for
  particle tracking, arxiv:0806.1199 (\textit{Preprint}
  \eprint{arxiv:0806.1199}) 

\bibitem{09HJ}
Huang B and Jebara T 2009 Approximating the permanent with belief propagation,
  arxiv:0908.1769 (\textit{Preprint} \eprint{arxiv:0908.1769})

\bibitem{63Gal}
Gallager R 1963 {\em Low density parity check codes\/} (MIT Press, Cambridge,
  MA)

\bibitem{88Pea}
Pearl J 1988 {\em Probabilistic Reasoning in Intelligent Systems: Networks of
  Plausible Inference\/} (San Francisco: Morgan Kaufmann Publishers, Inc.)

\bibitem{05YFW}
Yedidia J~S, Freeman W~T and Weiss Y 2005 {\em Information Theory, IEEE
  Transactions on Information Theory\/} {\bf 51} 2282--2312

\bibitem{55Kuh}
Kuhn H~W {\em Naval Research Logistics Quarterly\/}

\bibitem{92Ber}
Bertsekas D 1992 {\em Comput. Optimiz. Applic.\/} {\bf 1} 7–--66

\bibitem{08BSS}
Bayati M, Shah D and Sharma M 2008 {\em IEEE Transactions on Information
  Theory\/} {\bf 54} 1241--1251 proc. IEEE Int. Symp. Information Theory, 2006

\bibitem{06CCa}
Chertkov M and Chernyak V 2006 {\em Physical Review E\/} {\bf 73} 065102(R)

\bibitem{06CCb}
Chertkov M and Chernyak V~Y 2006 {\em Journal of Statistical Mechanics: Theory
  and Experiment\/}  P06009 (\textit{Preprint} \eprint{arXiv:cond-mat/0603189})

\bibitem{Idiscrete}
Ihara Y 1966 {\em Journal of the Mathematical Society of Japan\/} {\bf 18}
  219--235

\bibitem{Bass}
Bass H 1992 {\em Internat. J. Math\/} {\bf 3} 717--797

\bibitem{10WF}
Watanabe Y and Fukumizu K 2009 {\em Advances in Neural Information Processing
  Systems 23\/}  2017--2025

\bibitem{26vdW}
van~der Waerden B 1926 {\em [Aufgabe] 45, Jahresbericht der Deutschen
  Mathematiker-Vereinigung\/} {\bf 35} 117

\bibitem{81Fal}
Falikman D 1981 {\em Mathematical Notes\/} {\bf 29} 475--479
  \urlprefix\url{http://www.springerlink.com/content/h41162g677317110/}

\bibitem{81Ego}
Egorychev G 1981 {\em Siberian Mathematical Journal\/} {\bf 22} 854--859
  \urlprefix\url{http://www.springerlink.com/content/k692377516k1x778/}

\bibitem{08Gur}
Gurvits L 2008 {\em Electronic Journal of Combinatorics\/} {\bf 15} R66
  \urlprefix\url{http://www.emis.ams.org/journals/EJC/Volume_15/PDF/v15i1r66.p%
df}

\bibitem{09LS}
Laurent M and Schrijver A 2009 On {L}eonid {G}urvits' proof for permanents,
  \urlprefix\url{http://homepages.cwi.nl/~lex/files/perma5.pdf}

\bibitem{86LP}
Lov{\'a}sz L and Plummer M 1986 {\em {Matching theory}\/} ({\em North-Holland
  Mathematics Studies\/} vol 121) (Elsevier) Annals of Discrete Mathematics 29

\bibitem{81GG}
Godsil C~D and Gutman I 1981 {\em Journal of Graph Theory\/} {\bf 5} 137--144

\end{thebibliography}

\end{document}